\documentclass{article}
\usepackage{graphicx,amssymb,mathtools,cite,array,multirow,epstopdf,amsthm,comment,bm}
\usepackage[dvipsnames]{xcolor}
\usepackage[ruled,vlined]{algorithm2e}

\newtheorem{lemma}{Lemma}[section]
\theoremstyle{remark}
\newtheorem{remark}{Remark}
\topmargin -1.5cm
\usepackage{geometry}
 \geometry{
 a4paper,
 total={170mm,257mm},
 left=20mm,
 top=20mm,
 }
 \date{}
\begin{document}
\title{Robust Short-term Operation of AC Power Network with Injection Uncertainties}
\author{Anamika Tiwari, Abheejeet Mohapatra, and Soumya Ranjan Sahoo
\thanks{The authors are with the Department of Electrical Engineering, Indian Institute of Technology Kanpur, Kanpur, Uttar Pradesh, India 208016. (email: anamtiw@iitk.ac.in; abheem@iitk.ac.in; srsahoo@iitk.ac.in)}\vspace{-1.7em}}
\maketitle

\begin{abstract}
With uncertain injections from Renewable Energy Sources (RESs) and loads, deterministic AC Optimal Power Flow (OPF) often fails to provide optimal setpoints of conventional generators. A computationally time-efficient, economical, and robust solution is essential for ACOPF with short-term injection uncertainties. Usually, applying Robust Optimization (RO) for conventional non-linear ACOPF results in computationally intractable Robust Counterpart (RC), which is undesirable as ACOPF is an operational problem. Hence, this paper proposes a single-stage non-integer non-recursive RC of ACOPF, using a dual transformation, for short-term injection uncertainties. The proposed RC is convex, tractable, and provides base-point active power generations and terminal voltage magnitudes (setpoints) of conventional generators that satisfy all constraints for all realizations of defined injection uncertainties. The non-linear impact of uncertainties on other variables is inherently modeled without using any affine policy. The proposed approach also includes the budget of uncertainty constraints for low conservatism of the obtained setpoints. Monte-Carlo Simulation (MCS) based participation factored AC power flows validate the robustness of the obtained setpoints on NESTA and case9241pegase systems for different injection uncertainties. Comparison with previous approaches indicates the efficacy of the proposed approach in terms of low operational cost and computation time. \\ \\
\emph{Keywords:} Convexity, Optimal Power Flow, Participation Factor, Robust, Short-Term Operation, Uncertainty.

\end{abstract}

\section{Introduction}
\subsection{Motivation}
Optimal setpoints of conventional generators are obtained every 5-15 min from OPF (or economic dispatch) for secure and reliable AC power network operation \cite{time7436515}. Given the forecasted loads for the upcoming time slot, ACOPF finds setpoints with minimum generation cost to meet the load and satisfy network constraints. The integration of RESs, such as solar and wind generation, has brought about a paradigm shift in the operation of AC power network. Unlike conventional generators, the output of RES is uncertain due to dependence on weather conditions. Also, demand response enablers have increased the unpredictability of loads. Often, the deterministic ACOPF solution cannot ensure economic and secure operation of power network in the presence of uncertain loads and RESs injections \cite{mp1,bienstock2014chance}. \emph{This necessitates designing a robust, time-efficient, optimal dispatch policy to address short-term injection uncertainties so that the operational generation cost in AC network is minimum without any constraint violations}.

\subsection{Literature review}\label{sec1b}
In literature, various methods like Stochastic Programming \cite{sp1,sp2,stoch1}, Chance-Constrained Programming (CCP) \cite{bienstock2014chance,8017474}, Distributionally RO (DRO) \cite{dro1,dro2}, and RO have been proposed to handle uncertainties in OPF. RO only requires the information of possible range of uncertain parameters \cite{robustproof}, and assumes hard constraints\footnote[1]{Soft constraints (whose violations are tolerated if violations` magnitude and duration of violation are not high) in power networks can be considered via CCP \cite{8017474} and DRO \cite{dro1}. However, modeling such constraints is beyond the scope of this paper and shall be explored in future.}, which obviates any violation during the AC network operation with uncertainties \cite{gorissen2015practical}. However, due to the non-convex power flow equations, direct application of RO is strenuous. Hence, the initial robust OPF formulations \cite{6200395,bertismasuc,jabrrobust,ding2016adjustable} use the linear DCOPF model instead of the actual ACOPF model. In DCOPF, voltage magnitudes are all unity, and thus, the obtained solution is not always feasible for the actual AC network \cite{yasasvi}. Further, the presence of uncertain terms in usual non-linear equality constraints of OPF poses a difficulty in directly obtaining a tractable robust ACOPF model \cite{ben2002robust}. In previous works, this challenge is addressed by either converting the infinite-dimensional non-deterministic problem into a finite-dimensional deterministic problem by obtaining its RC \cite{bai,louca,Lee2021}, or by decomposition into multi-stage optimization problem in which the uncertainty is realized in one stage and the other variables are adjusted accordingly in other stages \cite{attarha2018adaptive,lorca,heidarabadi2020robust,yang2021robust}. The associated multi-stage problem is solved iteratively using Cutting Plane (CP) methods.

The RC formulations are computationally efficient but obtaining RC of a problem with uncertain equality constraints is not a straightforward exercise. To exclude uncertain terms in equality constraints, \cite{bai} considers average wind generation instead of actual uncertain generation. In \cite{louca}, the equality constraints are transformed into inequality constraints by assuming adjustable generators to be present at each bus, which is not always true and practically infeasible. The recursive inner-approximated robust ACOPF in \cite{Lee2021} gives conservative solutions with possibility of being infeasible at times due to convex restriction. The initialization of convex restriction is by actual deterministic ACOPF solution, and in each recursion (multi-stage formulation), the states are obtained from AC power flow equations, which increase the computation time. In \cite{attarha2018adaptive}, a tri-stage robust operational problem is posed, where uncertainty is realized in the middle stage and other variables are adjusted via primal and dual cuts. It also approximates actual non-linear power flow equations via linearization for the worst-case uncertainty realization. The formulation in \cite{heidarabadi2020robust} also uses linearized AC equations, and is solved iteratively by Benders Decomposition (BD). In \cite{lorca,yang2021robust}, the multi-period formulation is solved by column and constraint generation and BD, respectively. Effectively, the formulations of \cite{attarha2018adaptive,heidarabadi2020robust,lorca,yang2021robust} can handle the uncertain equality constraints but with relatively high computation time and convergence issues.

\subsection{Research gap}
A qualitative comparison of the attributes of formulations in \cite{bai,louca,Lee2021,attarha2018adaptive,lorca,heidarabadi2020robust,yang2021robust} is given in Table \ref{t1}. It is evident from Table \ref{t1} and Section \ref{sec1b} that the majority of previous formulations, except \cite{bai}, are multi-stage formulations that need to be solved recursively (repetitively solving the same formulation) or iteratively (CP methods, such as BD). The formulation in \cite{bai}, although being non-recursive, does not consider the impact of uncertainty on all variables, except active power generation, needs integer variable representation, and gives highly conservative solutions as the budget of uncertainty constraints are not considered. Unfortunately, such formulations with recursive solution process or integer variable representation are computationally burdensome and may not be suitable for the short-term (5-15 min) operational problems, such as OPF and optimal transmission switching, of large-scale practical AC power networks \cite{posoco}.
\begin{table}[htbp]
\centering
\caption{Attributes of a few previous robust AC OPF formulations and proposed formulation (SSNR - Single Stage, Non-Recursive)}
\begin{tabular}{cccccc}
\hline\hline
\multirow{2}{*}{Ref.} & Robust & \multirow{2}{*}{SSNR} & Optimization & Non- & Uncertainty\\
& Formulation & & Model & integer & Budget \\\hline
\cite{bai} & \textcolor{ForestGreen}{RC} & $\checkmark$ & \textcolor{ForestGreen}{Convex} & $\times$ & $\times$ \\ 
\cite{louca} &  \textcolor{ForestGreen}{RC} & $\times$ & \textcolor{ForestGreen}{Convex} &  $\checkmark$ & $\times$\\
\cite{Lee2021} & \textcolor{ForestGreen}{RC} & $\times$  &  \textcolor{ForestGreen}{Convex} & $\checkmark$ & $\times$\\ 
\cite{attarha2018adaptive}$^*$  & \textcolor{Red}{CP} &  $\times$ &  \textcolor{Red}{LP$^{1,2}$ ,NLP$^3$} & $\times$ & $\checkmark$\\ 
\cite{lorca} & \textcolor{Red}{CP} & $\times$ & \textcolor{ForestGreen}{Convex} & $\checkmark$ & $\checkmark$\\ 
\cite{heidarabadi2020robust}  & \textcolor{Red}{CP}& $\times$ & \textcolor{Red}{LP} & $\checkmark$ & $\checkmark$\\ 
\cite{yang2021robust} & \textcolor{Red}{CP} & $\times$ & \textcolor{ForestGreen}{Convex}& $\times$ & $\checkmark$\\
Prop. & \textcolor{ForestGreen}{RC} & $\checkmark$ & \textcolor{ForestGreen}{Convex}&  $\checkmark$ & $\checkmark$\\\hline\hline\\
\end{tabular}
\label{t1}

$^*$ In \cite{attarha2018adaptive}, a tri-stage formulation is proposed, where some stages are linear while other is non-linear. The superscripts indicate the stages of formulation.
\end{table}
 
\subsection{Contribution and paper organization}
Hence, \emph{a single-stage, non-recursive, time-efficient, dual, robust OPF formulation is proposed in this paper for the short-term operation of AC power network with injection uncertainties. The proposed formulation does not require integer variables to consider budget of uncertainty constraints}. The formulation provides conventional generators' dispatch setpoints, i.e., base-point active power generations and terminal voltage magnitudes, which are immune to short-term injection uncertainties. \emph{Also, in contrast to previous single-stage, non-recursive approach in \cite{bai}, the non-linear impact of uncertainties on variables (such as active loss, reactive generations, and load bus voltages), except the variables related to conventional generators' dispatch setpoints, is inherently modeled without using any affine policy}. As stated earlier, the worst-case realization of uncertainties is an arduous task for actual non-convex ACOPF. This issue is addressed by means of Second-Order Cone (SOC) relaxations that are further tightened by a set of linearized arctangent constraints. Using the property of dual of convexified ACOPF, a novel linear constraint is introduced that helps in the worst-case uncertainty realization. SOC relaxation for mesh network is exact for assumptions stated in \cite{probrobust,hijazi}. Hence, MCS based participation factored AC power flows \cite{plf1} numerically verify the robustness and feasibility of obtained setpoints. Thus, the key attributes of the proposed formulation are
\begin{itemize}
\item A novel single-stage, non-recursive, RC of ACOPF, which utilizes the properties of dual transformation, is proposed to handle short-term uncertainties in loads and RESs, and provides time-efficient robust conventional generators's setpoints, i.e., associated base-point active power generations and terminal voltage magnitudes, immune to all realizations of injection uncertainties.
\item The non-linear impact of uncertainties on variables, except conventional generators' setpoints, is inherently modeled without any affine policy by considering ACOPF constraints for base-case (no uncertainties) and worst-case uncertainties, along with ramp constraints on active power generations of conventional generators.
\item The budget of uncertainty constraints are incorporated for a trade-off between conservatism and robustness of the solution, while avoiding integer variables.
\end{itemize}
Mathematical proofs, extensive numeric results, and comparison with deterministic ACOPF, \cite{bai,louca} reveal that the proposed formulation can be efficiently solved to obtain optimal, robust and feasible conventional generators' setpoints for all uncertainty realizations. Deterministic ACOPF with associated convex formulation are discussed in Section \ref{sec2}. Section \ref{sec3} discusses the proposed RC formulation of ACOPF. Associated numeric results are discussed in Section \ref{sec4} while Section \ref{sec5} concludes the paper and gives directions for future work.

\section{Deterministic ACOPF and Associated Convex Formulation}\label{sec2}
Let $\mathbb{R}$ be the set of real numbers. $\mathcal{N}$, $\mathcal{B}$, $\mathcal{G}$, $\mathcal{L}$ and $\mathcal{R}(\subseteq{\mathcal{G}\cup\mathcal{L}})$ denote the set of buses, branches, generators, load and RES connected buses, respectively. $\mathcal{M}(i)$ is the set of buses connected to bus $i$. $P_{gi}$ ($Q_{gi}$) is the active (reactive) power output of generator $i \in \mathcal{G}$. $V_i\angle\theta_i$ is voltage phasor of bus $i\in\mathcal{N}$. $\theta_{ij}=\theta_i-\theta_j$ and $P_{ij}$ is the active power flow on line $(i,j)\in \mathcal{B}$ from bus $i$ to bus $j$. $P_{di}$ ($Q_{di}$), $P_{ri}$ ($Q_{ri}$) are the nominal active (reactive) load and RES generation at bus $i$, respectively. $\overline{(.)}$ and $\underline{(.)}$ denote the maximum and minimum bounds. $P_{gi}$ and $Q_{gi}$ are undefined for $i\notin\mathcal{G}$. $P_{di}$ and $Q_{di}$ ($P_{ri}$ and $Q_{ri}$) are undefined for $i\notin\mathcal{L}$ ($i\notin\mathcal{R}$). Similar definitions of injections are true for case with uncertainties.

The base-case (deterministic) ACOPF can be stated as
\begin{subequations}\label{model1}
\begin{align}\label{1}
&\min\sum_{i\in\mathcal{G}}\left[a_iP_{gi}+b_i\right]\\\label{2}
&\text{s.t. }P_{gi} - P_i=P_{di}-P_{ri}, \forall i\in\mathcal{N}\\\label{221}
&Q_{gi}- Q_i= Q_{di}-Q_{ri}, \forall i\in\mathcal{N}\\\label{811}
&G_{ii}c_{ii}+\sum_{j\in M(i)}\left[G_{ij}c_{ij}+B_{ij}s_{ij}\right]- P_i=0\,\forall i\in\mathcal{N}\\\label{821}
-&B_{ii}c_{ii}+\sum_{j\in M(i)}\left[G_{ij}s_{ij}-B_{ij}c_{ij}\right]-Q_i=0,\forall i\in\mathcal{N}\\\label{831}
&c_{ii}g_{ij}/|t_{ij}|^2+c_{ij}G_{ij}+s_{ij}B_{ij}-P_{ij}=0,\forall (i,j)\in\mathcal{B}\\\label{Pjieqconv}
&c_{jj}g_{ij}+c_{ij}G_{ji}-s_{ij}B_{ji}-P_{ji}=0,\forall (i,j)\in\mathcal{B}\\\label{16}
&c_{ij}^2+s_{ij}^2=c_{ii}c_{jj},\forall (i,j)\in\mathcal{B}\\\label{17}
&\tan\theta_{ij}=s_{ij}/c_{ij},\forall (i,j)\in\mathcal{B}\\\label{4}
&\underline{P}_{ij}\leq P_{ij}\leq\overline{P}_{ij}, \underline{P}_{ij}\leq P_{ji}\leq\overline{P}_{ij},\forall (i,j)\in\mathcal{B}\\\label{Prlimit}
&\underline{Q}_{ri} \leq Q_{ri} \leq \overline{Q}_{ri},\forall i\in\mathcal{R}\\\label{6}
&\underline{P}_{gi} \leq P_{gi} \leq \overline{P}_{gi}, \underline{Q}_{gi} \leq Q_{gi} \leq \overline{Q}_{gi},\forall i\in\mathcal{G}\\\label{2210}
&\underline{V}_i^2 \leq c_{ii} \leq \overline{V}_i^2 ,\underline{\theta}_i \leq \theta_i \leq \overline{\theta}_i,\forall i\in\mathcal{N},\theta_{ref}=0\\\label{2220}
&\underline{\theta}_{ij} \leq \theta_{ij} \leq \overline{\theta}_{ij},\forall (i,j)\in\mathcal{B}
\end{align}
\end{subequations}
where $a_i$ and $b_i$ are positive cost coefficients of $i^{th}$ generator. Linear cost characteristics are considered as the focus is on the short-term operation of AC power network \cite{posoco,louca}. The nodal power balance equations are \eqref{2}, \eqref{221}. The active and reactive injections ($P_i$ and $Q_i$) at each bus are, respectively, \eqref{811} and \eqref{821}, where $c_{ii}= V_i^2,\forall i\in\mathcal{N}$, and $c_{ij}=V_iV_j\cos\theta_{ij}$, $s_{ij}=V_iV_j\sin\theta_{ij},\forall (i,j) \in\mathcal{B}$. $G_{ij}$ and $B_{ij}$ are real and imaginary parts of $(i,j)^{th}$ element of the bus admittance matrix, respectively. \eqref{831} - \eqref{Pjieqconv} are the active power flows ($P_{ij}$ and $P_{ji}$) on each branch $(i,j)\in\mathcal{B}$, where $g_{ij}$ is the real part of the associated branch impedance and $t_{ij}$ is the complex transformer tap. \eqref{16} and \eqref{17} are valid as per the definitions of $c_{ii}$, $c_{ij}$, and $s_{ij}$. $P_{ij},P_{ji},\forall(i,j)\in\mathcal{B}$ are bounded in \eqref{4}, where $\underline{P}_{ij}=-\overline{P}_{ij}$. The $i^{th}$ RES reactive power output is bounded in \eqref{Prlimit}, where $\overline{Q_{ri}}=-\underline{Q_{ri}}=\sqrt{(S_{ri}^{max})^2-P_{ri}^2}$ and $S_{ri}^{max}$ is the maximum rating of $i^{th}$ RES generator equipped with an inverter \cite{louca}. The conventional generators power outputs are bounded in \eqref{6}. Bus voltage magnitudes and associated phase angles are bounded in \eqref{2210}, where $\theta_{ref}$ is the reference bus voltage phase angle, and $\overline{\theta}_i=\pi/2=-\underline{\theta}_i,\forall i\in\mathcal{N}$. The branch power flows are further constrained by bounds on $\theta_{ij},\forall(i,j)\in\mathcal{B}$ in \eqref{2220} (details on page 17 of \cite{coffrin2014nesta}), where $\underline{\theta}_{ij}=-\overline{\theta}_{ij}$. Limits on branch apparent flow, instead of active power flow, can be considered to limit the branch current magnitude. However, this is beyond the scope of present paper, and for simplicity, constraints in \eqref{4} are considered.

The formulation in \eqref{model1} is non-convex due to the non-convexity of \eqref{16} and \eqref{17}. With injection uncertainties, the non-convexity of ACOPF poses difficulty in quick realization of the associated worst-case, and thus, obtaining a tractable robust solution is arduous \cite{ben2002robust}. Hence, in the associated SOC relaxation, \eqref{16} is convexified as rotated cones \cite{yasasvi2021robust} as
\begin{subequations}
\begin{align}\label{conveq16}
&C_{ij}=2c_{ij},S_{ij}=2s_{ij},\forall(i,j)\in\mathcal{B}\\\label{conveq19} &E_{ij}=c_{ii}-c_{jj},D_{ij}=c_{ii}+c_{jj},\forall(i,j)\in\mathcal{B}\\\label{convcone}
&C_{ij}^2+S_{ij}^2+E_{ij}^2-D_{ij}^2\leq 0,\forall(i,j)\in\mathcal{B}
\end{align}
\eqref{17} can be represented through appropriate constraints using the properties of AC power network. Usually, for typical operating condition of the power network, bus voltage magnitudes are close to $1$ pu and voltage phase angle differences across lines rarely exceed $45^{\circ}$ in \eqref{2220}, as \cite{6981994} reports these differences to not exceed $30^{\circ}$, with most differences being less than $10^{\circ}$. Hence, under such conditions, $c_{ij}\simeq1,\forall(i,j)\in\mathcal{B}$, and
\begin{equation}\label{spll1}
s_{ij}\simeq(1+ \Delta V_i)(1+ \Delta V_j)\theta_{ij}\simeq \theta_{ij} \pm\epsilon_\theta,\forall(i,j)\in\mathcal{B}
\end{equation}
Consequently, from \eqref{spll1}, \eqref{17} can be expressed as \cite{yasasvi2021robust}
\begin{equation}\label{23}
-\epsilon_\theta \leq \theta_{ij}-s_{ij} \leq \epsilon_\theta
\end{equation}
\end{subequations}
where $\epsilon_\theta$ is the allowable error, chosen as per \cite{yasasvi2021robust}. For the same SOC relaxation, i.e., \eqref{conveq16} - \eqref{convcone} and \eqref{23}, the results for Case 2 in Section IV.A of \cite{yasasvi2021robust} strongly indicate the existence of zero optimality gap with same solution between actual non-linear, non-convex ACOPF and associated convex relaxed ACOPF with fixed line plans. Thus, inspired by results of \cite{yasasvi2021robust}, the convexified compact form of \eqref{model1} is (elements in $< >$ are associated Lagrange multipliers)
\begin{subequations}\label{model2}
\begin{align}\label{m2obj}
&\min_{\mathbf{x},\mathbf{y},\mathbf{y}_c}\mathbf{c}^T\mathbf{x}+d\\\label{m2eq}
\text{s.t. }&\mathbf{P}\mathbf{x}+\mathbf{Q}\mathbf{y}=\mathbf{f} & <\mathbf{z}_2>\\\label{m2ineq}
&\mathbf{M}\mathbf{x}+\mathbf{N}\mathbf{y}\leq\mathbf{e} & <\mathbf{z}_1>\\\label{m2cone1}
&\mathbf{y}_c+\mathbf{V}\mathbf{x}+\mathbf{W}\mathbf{y}=\mathbf{0} & <\mathbf{z}_3>\\\label{m2cone2}
&\mathbf{y}_{cij}^T\mathbf{H}\mathbf{y}_{cij}\leq\mathbf{0},\forall(i,j)\in\mathcal{B} & <z_{cij}>
\end{align}
\end{subequations}
where $\mathbf{x}=\{P_{gi},c_{ii},\forall i\in\mathcal{G}\}$. $\mathbf{y}_c=\{\mathbf{y}_{cij},\forall(i,j)\in\mathcal{B}\}$ is the vector of variables used to represent the rotated cones in \eqref{convcone} while $\mathbf{y}$ is the vector of other variables. Hence, \eqref{m2obj} refers to \eqref{1} with $d=\sum_{i\in\mathcal{G}}b_i$. \eqref{m2eq} denotes to \eqref{2} - \eqref{Pjieqconv} while \eqref{m2ineq} denotes \eqref{4} - \eqref{2220} and \eqref{23}. \eqref{m2cone1} represents \eqref{conveq16} and \eqref{conveq19} while \eqref{m2cone2} refers to \eqref{convcone}, where $\mathbf{y}_{cij}=\{C_{ij},S_{ij},E_{ij},D_{ij}\}$. $\mathbf{H}$ in \eqref{m2cone2} is a $4\times4$ diagonal matrix with the first three diagonal elements as $1$ and last diagonal element as $-1$. $\mathbf{0}$ is a zero matrix of appropriate dimension. Based on \eqref{model2}, the proposed robust ACOPF, discussed next, gives robust $\mathbf{x}$ for short-term operation of AC power network, whose feasibility is discussed in Section \ref{robust}. The results in Section \ref{sec4444} also indicate this feasibility for short-term injection uncertainties, which is further validated by MCS based participation factored AC power flows.

\section{Proposed Methodology}\label{sec3}
The formulation in \eqref{model1} can provide base-point active power generations and terminal voltage magnitudes of conventional generators for ACOPF with no uncertainties. However, with injection uncertainties, the same settings are not optimal and result in line overloads with potential of cascaded outages \cite{bienstock2014chance}. This phenomenon is illustrated in Fig. \ref{fig1} for the NESTA $118$ bus system. Hence, short-term injection uncertainties are taken into account for the secure operation of the AC network.
\begin{figure}[htbp]
\centering
\includegraphics[scale=0.7]{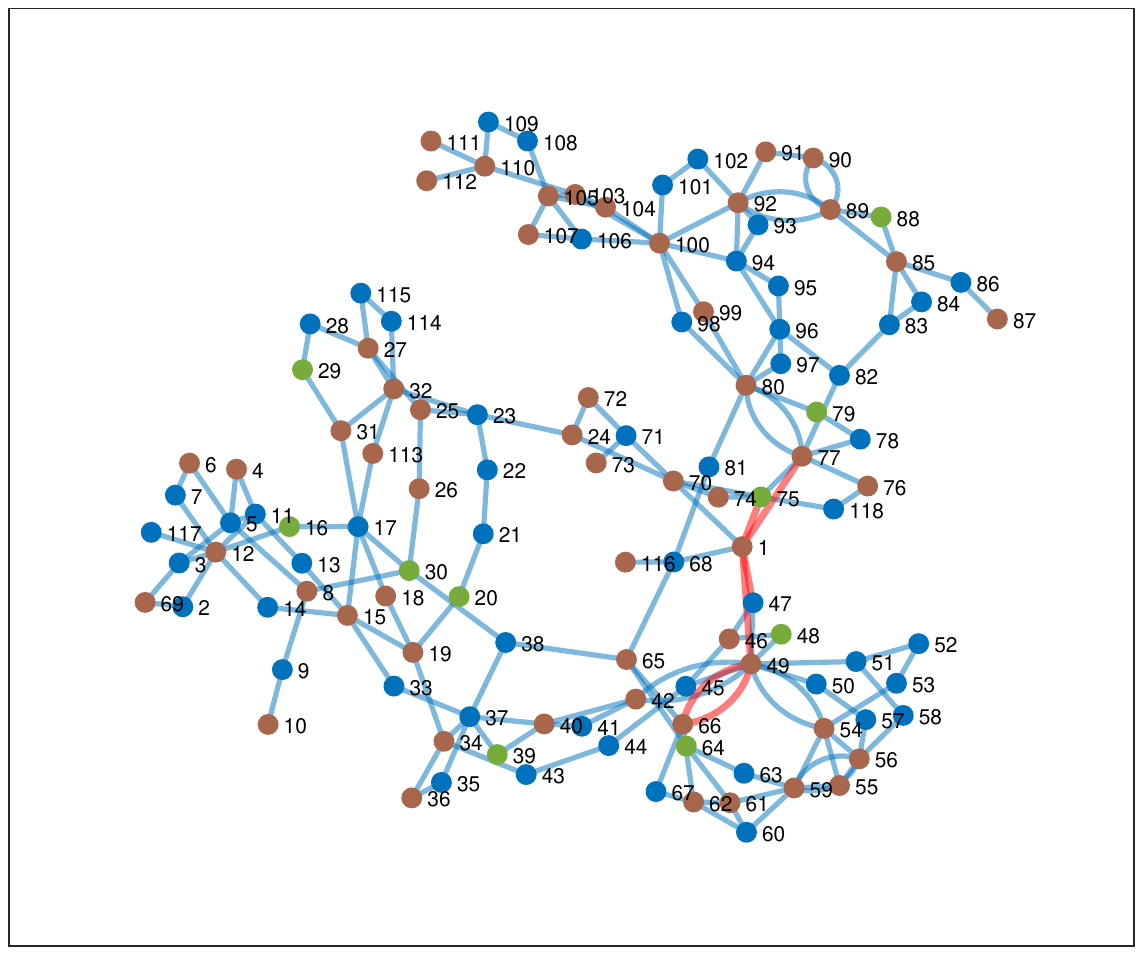}
\caption{Line flow violation due to solution from \eqref{model1} in NESTA $118$ bus system for all RESs and loads injection uncertainties. Conventional generator, RES and load buses are marked by brown, green and blue dots, respectively. RES penetration is $30\%$ of available generation capacities. Deviations in RESs (loads) are $15\%$ ($5\%$) of respective nominal values. Lines in red exceed their flow limits by $3\%-48.6\%$ of respective maximum capacity.
\label{fig1}}
\end{figure}

\subsection{Modeling uncertainties and associated response}\label{sec2a}
\subsubsection{Uncertainty model}\label{sec2b1}
The uncertain injections of active power load and RES generation at bus $i$ are modeled as \cite{8017474}
\begin{subequations}
\begin{align}\label{Pd}
\hat{P}_{di}&=P_{di}+\mu_{di},\forall i\in\mathcal{L}\\\label{Pr}
\hat{P}_{ri}&=P_{ri}+\mu_{ri},\forall i\in\mathcal{R}
\end{align}
where $P_{di}$ ($P_{ri}$) is the forecasted nominal real power load (RES generation). $\mu_{di}$ ($\mu_{ri}$) is the associated deviation in the real power load (RES generation) with zero mean around the respective forecasted value \cite{8017474}. Further, $\bm{\mu}\in\mathcal{U}$, $\mathcal{U}:=\{\bm{\mu}\in\mathbb{R}^{|\mathcal{L}|+|\mathcal{R}|}:\bm{\mu} \in [-\overline{\bm{\mu}},\overline{\bm{\mu}}]\}$, where $\overline{\bm{\mu}}$ is the maximum deviation from the respective nominal load and RES generation (RO requires range of uncertain parameters \cite{robustproof}). Assuming constant power factor operation of loads, the uncertainty in $Q_{di}$ can be stated as
\begin{equation}\label{Qd}
\hat{Q}_{di}=Q_{di}+LR_i\mu_{di},\forall i\in\mathcal{L}
\end{equation}
where $LR_i=\tan\phi_i,\forall i\in\mathcal{L}$, and $\cos\phi_i$ is the associated constant power factor of $i^{th}$ load.

\subsubsection{RES generator model}
RES generators, equipped with power electronics inverters can inject both active and reactive powers. As per \eqref{Pr} and defined range of uncertainty, the uncertain active power output of $i^{th}$ RES generator, i.e., $\hat{P}_{ri},\forall i\in\mathcal{R}$ is in range of $P_{ri}+[-\overline{\mu}_{ri},\overline{\mu}_{ri}]$. Consequently, instead of \eqref{Prlimit}, the uncertain reactive power output of $i^{th}$ RES generator should satisfy
\begin{equation}\label{qrlim}
\underline{\hat{Q}_{ri}}\leq \hat{Q}_{ri}\leq \overline{\hat{Q}_{ri}},\forall i\in\mathcal{R}
\end{equation}
where $\overline{\hat{Q}_{ri}}=-\underline{\hat{Q}_{ri}} =\sqrt{(S_{ri}^{max})^2-(P_{ri}+\overline{\mu}_{ri})^2}$ \cite{louca}.

\subsubsection{Modeling system response}
The deviations in nominal loads and RESs generations in \eqref{Pd} - \eqref{qrlim} due to injection uncertainties change the network power flows even for the short-term operation. To accommodate such changes, the controllable (conventional) generators economically adjust their setpoints to ensure power balance and maintain the desired voltage profile in the network at $\bm{\mu}\neq0$ \cite{8017474}. Hence, the active power output affinely adjusts as per the respective Automatic Generation Control (AGC) action \cite{plf1}, i.e.,
\begin{equation}\label{Ag1}
\hat{P}_{gi}=P_{gi}+\rho_i\psi,\forall i\in\mathcal{G}
\end{equation}
where $\hat{P}_{gi}$ is the active power output of $i^{th}$ generator at $\bm{\mu}\neq0$. $P_{gi}$ and $\rho_i$ are the associated base-point output at $\bm{\mu}=0$ and participation factor, respectively, with $\sum_{i\in\mathcal{G}}\rho_i=1$ \cite{8017474}. $\psi$ is the unknown real-time active power mismatch, due to non-zero uncertain injections and resultant change in network's active power loss with respect to $\bm{\mu}=0$. Hence, as per \eqref{Pd}, \eqref{Pr}, and \eqref{Ag1}, \eqref{2} for $\bm{\mu}\neq0$ is
\begin{equation}\label{pgnew}
P_{gi}+\rho_i\psi-\mu_{di}+\mu_{ri}-\hat{P}_{i}=P_{di}-P_{ri},\forall i\in\mathcal{N}
\end{equation}
where $\hat{P}_i$ is the active power injection at bus $i$ for $\bm{\mu}\neq0$.

Similarly, \eqref{221} for $\bm{\mu}\neq0$, as per \eqref{Qd} and \eqref{qrlim}, is
\begin{equation}\label{qgnew}
\hat{Q}_{gi}-LR_i\mu_{di}+\hat{Q}_{ri}-\hat{Q}_{i}=Q_{di},\forall i\in\mathcal{N}
\end{equation}
where $\hat{Q}_{gi},\forall i\in\mathcal{G}$ and $\hat{Q}_i,\forall i\in\mathcal{N}$ are the respective reactive generation and injection for $\bm{\mu}\neq0$. Also, \eqref{6} for $\bm{\mu}\neq0$ is
\begin{equation}\label{pgnew1}
\underline{P}_{gi}\leq P_{gi}+\rho_i\psi\leq\overline{P}_{gi}, ~ \underline{Q}_{gi}\leq\hat{Q}_{gi}\leq\overline{Q}_{gi},\forall i\in\mathcal{G}
\end{equation}

As per \cite{spyros18}, the conventional generators regulate their reactive power outputs so that $c_{ii},\forall i\in\mathcal{G}$ remain unchanged for $\bm{\mu}\neq0$ and $\bm{\mu}=0$, due to the infrequent action of automatic voltage control as compared to AGC in conventional generators \cite{avr}. Hence, keeping $c_{ii},\forall i\in\mathcal{G}$ unchanged is desirable for the short-term operation with injection uncertainties. It can be noted that $P_i$, $Q_i$, $Q_{gi}$, $Q_{ri}$, defined for $\bm{\mu}=0$, respectively, change to $\hat{P}_i$, $\hat{Q}_i$, $\hat{Q}_{gi}$, $\hat{Q}_{ri}$ for $\bm{\mu}\neq0$ in \eqref{qrlim}, \eqref{pgnew} - \eqref{pgnew1}. Similarly, all variables in \eqref{model2}, except $\mathbf{x}$, i.e., $\mathbf{y}$ and $\mathbf{y}_c$, respectively, non-linearly vary to $\mathbf{\hat{y}}$ and $\mathbf{\hat{y}}_c$ for $\bm{\mu}\neq0$. Additionally, the change in generators' active power in \eqref{Ag1}, i.e., $\rho_i\psi,\forall i\in\mathcal{G}$ is bounded by active power ramp limits as
\begin{equation}\label{ramp1}
-\overline{r}_i\leq\rho_i\psi\leq\overline{r}_i,\forall i\in\mathcal{G}
\end{equation}
\end{subequations}
where $\overline{r}_i\geq0$ is the respective limit.\vspace{-0.5em}

\subsection{Proposed robust ACOPF formulation}
Thus, as per the above discussion, \eqref{m2eq} - \eqref{m2cone2}, respectively, transform to \eqref{m5eq} - \eqref{m5cone2} for $\bm{\mu}\neq0$. Hence, the proposed primal robust ACOPF, which minimizes the total generation cost for the worst-case uncertainty and forecasted operating point while satisfying network constraints for base-case (no uncertainty) and all non-zero realizations of uncertainty, is
\begin{subequations}\label{model5}
\begin{align}\label{m5obj}
&\max_{\bm{\mu}}\min_{\mathbf{x},\mathbf{y},\mathbf{y}_c,\mathbf{\hat{y}},\mathbf{\hat{y}}_c,\psi}\mathbf{c}^T\mathbf{x}+d\text{ s.t. }\eqref{m2eq} - \eqref{m2cone2}\\\label{m5eq}
&\mathbf{P}\mathbf{x}+\mathbf{Q}\mathbf{\hat{y}}+\mathbf{h}_E\psi+\mathbf{K}_E\bm{\mu}=\mathbf{f} & <\mathbf{\hat{z}}_2>\\\label{m5ineq}
&\mathbf{M}\mathbf{x}+\mathbf{N}\mathbf{\hat{y}}+\mathbf{h}_I\psi\leq\mathbf{\hat{e}} & <\mathbf{\hat{z}}_1>\\\label{m5cone1}
&\mathbf{\hat{y}}_c+\mathbf{V}\mathbf{x}+\mathbf{W}\mathbf{\hat{y}}=\mathbf{0} & <\mathbf{\hat{z}}_3>\\\label{m5cone2}
&\mathbf{\hat{y}}_{cij}^T\mathbf{H}\mathbf{\hat{y}}_{cij}\leq\mathbf{0},\forall(i,j)\in\mathcal{B} & <\hat{z}_{cij}>\\\label{m5ramp}
&\mathbf{h}_r\psi\leq\mathbf{r} & <\mathbf{\hat{z}}_r>
\end{align}
\end{subequations}
where $\mathbf{c}$, $d$, $\mathbf{P}$, $\mathbf{Q}$, $\mathbf{f}$, $\mathbf{M}$, $\mathbf{N}$, $\mathbf{V}$, $\mathbf{W}$, and $\mathbf{H}$ are as per \eqref{model2}. $\mathbf{e}$ in \eqref{m2ineq} modifies to a known $\mathbf{\hat{e}}$ in \eqref{m5ineq} due to \eqref{qrlim} replacing \eqref{Prlimit} for $\bm{\mu}\neq0$ while bounds of other constraints in \eqref{m2ineq} remain unchanged, similar to the case in \eqref{6} and \eqref{pgnew1}. $\mathbf{h}_E$ and $\mathbf{h}_I$ are non-zero column vectors corresponding to \eqref{pgnew} and \eqref{pgnew1}, respectively. Similarly, $\mathbf{K}_E$ is a coefficient matrix corresponding to $\bm{\mu}\neq0$ in \eqref{pgnew}, \eqref{qgnew}. \eqref{m5ramp} corresponds to \eqref{ramp1}. The following remarks present three critical aspects of \eqref{model5}.
\begin{remark}
\textit{For a known $\bm{\mu}\neq0$, it can be proven that without \eqref{m2eq} - \eqref{m2cone2}, there exists a generator $i\in\mathcal{G}$ for which \eqref{model5} gives $\rho_i\psi=\overline{r}_i$ and $P_{gi}\in[\underline{P}_{gi},\overline{P}_{gi}]-\overline{r}_i$, where it is possible that $\underline{P}_{gi}\leq\overline{r}_i$ as $\underline{P}_{gi}\geq0$ and $0\leq\overline{r}_i\leq\overline{P}_{gi}$ (non-negative conventional active generation and ramp limits). Consequently, \eqref{model5} gives $P_{gi}\leq0$ as optimal solution for minimum cost, which is impractical. With \eqref{m2eq} - \eqref{m2cone2} part of \eqref{model5}, this does not exist, due to \eqref{m5ramp} and $P_{gi}\in[\max(\underline{P}_{gi},\underline{P}_{gi}-\rho_i\psi),\min(\overline{P}_{gi},\overline{P}_{gi}-\rho_i\psi)]\forall i\in\mathcal{G}$}.
\end{remark}
\begin{remark}
\textit{In \eqref{model5}, $\mathbf{y}$ and $\mathbf{y}_c$ non-linearly adjust to $\mathbf{\hat{y}}$ and $\mathbf{\hat{y}}_c$ for $\bm{\mu}\neq0$, respectively, without any affine control policy}.
\end{remark}
\begin{remark}\label{remark3}
\textit{For $\bm{\mu}=0$, the resultant change in network's active power loss with respect to base-case (no uncertainty) is $0$. Hence, $\psi=0$ for $\bm{\mu}=0$. Consequently, $\hat{P}_{gi}={P}_{gi}$ and $P_i=\hat{P}_i$ in \eqref{Ag1}, \eqref{pgnew}, respectively. In fact, \eqref{m5eq} - \eqref{m5cone2} are, respectively, equivalent to \eqref{m2eq} - \eqref{m2cone2} for $\bm{\mu}=0$. Nonetheless, \eqref{m5ramp} couples the conditions $\bm{\mu}=0$ and $\bm{\mu}\neq0$ in \eqref{model5}. Since \eqref{m5ramp} regulates the real-time change in active generation of conventional generators, the level of short-term injection uncertainties that can be handled depends on ramp limits}.
\end{remark}

It is evident from \eqref{m5eq} that uncertain injections appear in equality constraints. Uncertain terms in equality constraints restrict the primal feasible region \cite{gorissen2015practical}. However, in the dual form, the uncertain terms appear only in the dual objective with dual constraints free of uncertain injections. Hence, a novel single-stage non-recursive convex RC of \eqref{model5} is proposed next, which provides computationally time-efficient robust setpoints of conventional generators for the short-term operation of AC power network with injection uncertainties.

\subsection{Novel single-stage non-recursive dual RC}\label{robust}
The dual of \eqref{model5} aids in the worst-case realization of $\bm{\mu}\in\mathcal{U}$, as $\bm{\mu}$ appears only in dual objective to be maximized. Further, the dual of \eqref{model5} from conic dual theory \cite{boyd2004convex} does not guarantee strong duality. Hence, a novel dual transformation is utilized using the \emph{Karush-Kuhn-Tucker} conditions of the associated Lagrange that guarantees strong duality between dual of \eqref{model5} and \eqref{model5}. This novel dual transformation of \eqref{model5} for $\bm{\mu}\in\mathcal{U}$ is
\begin{subequations}\label{m51}
\begin{align}\label{m5}
&\max d-\mathbf{f}^T(\mathbf{z}_2+\mathbf{\hat{z}}_2)-\mathbf{e}^T\mathbf{z}_1-\mathbf{\hat{e}}^T\mathbf{\hat{z}}_1-\mathbf{r}^T\mathbf{\hat{z}}_r+\bm{\mu}^T\mathbf{K}_E^T\mathbf{\hat{z}}_2\\\label{uncs}
&\text{s.t. }\mathbf{Q}^T\mathbf{z}_2+\mathbf{N}^T\mathbf{z}_1+\mathbf{W}^T\mathbf{z}_3=\mathbf{0}\; <\mathbf{y}>\\
&\mathbf{Q}^T\mathbf{\hat{z}}_2+\mathbf{N}^T\mathbf{\hat{z}}_1+\mathbf{W}^T\mathbf{\hat{z}}_3=\mathbf{0}\; <\mathbf{\hat{y}}>\\\label{m5eqx}
&\mathbf{P}^T(\mathbf{z}_2+\mathbf{\hat{z}}_2)+\mathbf{M}^T(\mathbf{z}_1+\mathbf{\hat{z}}_1)+\mathbf{V}^T(\mathbf{z}_3+\mathbf{\hat{z}}_3)=-\mathbf{c}\;<\mathbf{x}>\\
&\mathbf{h}_E^T\mathbf{\hat{z}}_2+\mathbf{h}_I^T\mathbf{\hat{z}}_1+\mathbf{h}_r^T\mathbf{\hat{z}}_r=0\;<\psi>\\\label{dcone}
&\mathbf{z}_{3ij}^T\mathbf{H}\mathbf{z}_{3ij}\leq\mathbf{0},\mathbf{\hat{z}}_{3ij}^T\mathbf{H}\mathbf{\hat{z}}_{3ij}\leq\mathbf{0},\forall (i,j)\in\mathcal{B}\\\label{unce34}
&\mathbf{z}_{3ij}+2z_{cij}\mathbf{H}\mathbf{y}_{cij}=0,\forall (i,j)\in\mathcal{B}\;<\mathbf{y}_{cij}>\\\label{unce35}
&\mathbf{\hat{z}}_{3ij}+2\hat{z}_{cij}\mathbf{H}\mathbf{\hat{y}}_{cij}=0,\forall (i,j)\in\mathcal{B}\;<\mathbf{\hat{y}}_{cij}>\\\label{unce1}
&\mathbf{z}_1,\mathbf{z}_{c},\mathbf{\hat{z}}_1,\mathbf{\hat{z}}_{c},\mathbf{\hat{z}}_r\geq\mathbf{0},\mathbf{z}_2,\mathbf{z}_3,\mathbf{\hat{z}}_2,\mathbf{\hat{z}}_3\text{ unrestricted}
\end{align} 
\end{subequations}
where $\mathbf{z}_3=\{\mathbf{z}_{3ij},\forall(i,j)\in\mathcal{B}\}$, $\mathbf{\hat{z}}_3=\{\mathbf{\hat{z}}_{3ij},\forall(i,j)\in\mathcal{B}\}$, $\mathbf{z}_c=\{z_{cij},\forall(i,j)\in\mathcal{B}\}$ and $\mathbf{\hat{z}}_c=\{\hat{z}_{cij},\forall(i,j)\in\mathcal{B}\}$. \eqref{m5} - \eqref{dcone} are obtained from conic duality theory \cite{boyd2004convex}, whereas \eqref{unce34} - \eqref{unce35} are normalization constraints to ensure strong duality, similar to extended self dual model in \cite{sedumi11,sedumi}. More details on derivation of \eqref{m51} and proof of strong duality are in Appendix A of \cite{yasasvi2021robust}. \emph{Since strong duality exists, the Lagrange multipliers in \eqref{m51} are same as the primal variables of \eqref{model5}}. Further, $\bm{\mu}^TK_E^T\mathbf{\hat{z}_2}$ in \eqref{m5} can be re-written as
\begin{equation}\label{repres}
\sum_{j=1}^{|\mathcal{U}|}\{\max(R_j,0)\overline{\mu}_j+\min(R_j,0)\underline{\mu}_j\}
\end{equation}
where $\overline{\mu}_j$ and $\underline{\mu}_j$ are upper and lower bounds of $\mu_j$ in Section \ref{sec2b1}. $|\mathcal{U}|$ is cardinality of $\mathcal{U}$ and $\mathbf{R}=\mathbf{K}_E^T\mathbf{\hat{z}_2}$.

For the worst-case realization of short-term injection uncertainties, it is evident that a positive (non-positive) element of $\mathbf{R}$ in \eqref{m5} should correspond to the associated element of $\bm{\mu}$ being at its maximum (minimum). This is inherently ensured in \eqref{repres}. Therefore, the proposed novel single-stage non-recursive RC of \eqref{m51} can be stated by its epigraph \cite{gorissen2015practical} as
\begin{subequations}\label{m6}
\begin{align}\label{od}
&\max d-\mathbf{f}^T(\mathbf{z}_2+\mathbf{\hat{z}}_2)-\mathbf{e}^T\mathbf{z}_1-\mathbf{\hat{e}}^T\mathbf{\hat{z}}_1-\mathbf{r}^T\mathbf{\hat{z}}_r+\mathbf{1}^T\mathbf{t}\\\label{16b}
&\text{s.t. }R_j^+-R_j^-R_j=0,\forall j=1,\dots,|\mathcal{U}|,\eqref{uncs} - \eqref{unce1}\\\label{ru}
&R_j^+\underline{\mu}_j-R_j^-\overline{\mu}_j\leq t_j\leq R_j^+\overline{\mu}_j-R_j^-\underline{\mu}_j,\forall j=1,\dots,|\mathcal{U}|\\\label{16d}
&0\leq R_j^+,R_j^-\leq T,\forall j=1,\dots,|\mathcal{U}|
\end{align}
\end{subequations}
where $\mathbf{1}$ and $\mathbf{t}$ are column vectors of length $|\mathcal{U}|$. $R_j$ is the $j^{th}$ element of $\mathbf{R}$, and is the coefficient of $\mu_j$, the $j^{th}$ element of $\bm{\mu}$ in \eqref{m5}. $T$ is a large Big-M number. $R_j^+$ and $R_j^-$ take appropriate values as per \eqref{16b} and \eqref{16d}, based on the sign of $R_j$. Depending on $R_j^+$ and $R_j^-$, $t_j$ in \eqref{ru} chooses an appropriate value so that the objective in \eqref{od} is maximized. The proposed RC \eqref{m6} is convex and provides robust conventional generators' setpoints as per the following discussion for all realizations of short-term injection uncertainties.
\begin{remark}
\textit{To the best of authors' knowledge, this is the first single-stage, non-recursive RC formulation of robust ACOPF using duality theory, and is the main contribution of this paper. \eqref{model5} and \eqref{m51} are primal-dual pairs, between which strong duality exists \cite{yasasvi2021robust}, and \eqref{m6} is the RC of \eqref{m51}. Hence, $\mathbf{x}$, obtained as Lagrange multipliers of \eqref{m5eqx}, are actually the desired robust primal variables in \eqref{model5}. Since, the desired robust $\mathbf{x}$ is directly obtained from \eqref{m6}, there is no need of a recursive solution procedure. Hence, \eqref{m6} efficiently provides the robust values of $P_{gi}$, $c_{ii},\forall i\in\mathcal{G}$ in a single-stage non-recursive manner, suitable for the short-term operation of AC power network}.
\end{remark}
\begin{lemma}\label{convfeasibility}
The feasible set of RC in \eqref{m6} is a closed convex set, so that the RC of an uncertain convex program is a convex program. If the RC is feasible, then all instances of uncertain program are feasible, and the robust optimal value is greater or equal to optimal values of all instances. 
\end{lemma}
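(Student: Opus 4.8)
The plan is to read Lemma~\ref{convfeasibility} as a concrete specialization of the standard robust-counterpart meta-result (see \cite{gorissen2015practical}) and to prove it in two stages: first establish that the feasible set of \eqref{m6} is closed and convex, and then deduce the three structural consequences (convexity of the program, feasibility of every instance, and the bound on the robust optimal value) by an elementary set-inclusion and monotonicity argument.

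For the first stage I would verify convexity and closedness of \eqref{m6} constraint class by constraint class. The linear equalities \eqref{uncs}--\eqref{m5eqx} and the $\psi$-stationarity equality describe affine subspaces; the box bounds \eqref{16d}, the sign restrictions \eqref{unce1}, and the sandwich inequalities \eqref{ru} describe finite intersections of half-spaces; hence all of these are closed polyhedra. The conic constraints \eqref{dcone}, namely $\mathbf{z}_{3ij}^T\mathbf{H}\mathbf{z}_{3ij}\le 0$ together with its counterpart in $\mathbf{\hat z}_3$, where $\mathbf{H}$ is the $4\times4$ diagonal matrix with diagonal $1,1,1,-1$, are second-order-cone constraints once one records that the fourth component is sign-restricted (inherited from $D_{ij}=c_{ii}+c_{jj}\ge 0$ and self-duality of the Lorentz cone), so each constraint is confined to a single nappe and is closed and convex. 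Using strong duality between \eqref{model5} and \eqref{m51} (so that the multipliers appearing in \eqref{m51} coincide with the primal variables of \eqref{model5}) together with this same sign information, the normalization/linearization identities \eqref{unce34}--\eqref{unce35} and the splitting constraint \eqref{16b} are read as affine relations among the retained variables. Since \eqref{m6} is a finite intersection of closed convex sets its feasible set is closed and convex, and as the objective \eqref{od} is linear in the decision variables (the epigraph variables $\mathbf{t}$ being pinned by the affine relations \eqref{ru}), \eqref{m6} is a convex program, which is the first assertion.

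For the second stage, recall from the discussion preceding the lemma that \eqref{model5} and \eqref{m51} form a primal--dual pair with strong duality \cite{yasasvi2021robust}, and that \eqref{m6} is the epigraph reformulation of \eqref{m51} in which the uncertainty $\bm{\mu}\in\mathcal{U}$ has been absorbed entirely into the objective via \eqref{repres}. Consequently, if \eqref{m6} is feasible then, through the multipliers of \eqref{m5eqx}, it yields an $\mathbf{x}$ together with states satisfying \eqref{m2eq}--\eqref{m2cone2} and \eqref{m5eq}--\eqref{m5ramp} simultaneously for \emph{every} $\bm{\mu}\in\mathcal{U}$; restricting attention to any single $\bm{\mu}\in\mathcal{U}$ produces a point feasible for the corresponding deterministic instance, so all instances are feasible. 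Finally, the robust value computed by \eqref{m6} equals, by strong duality, the worst-case value $\max_{\bm{\mu}\in\mathcal{U}}\min(\mathbf{c}^T\mathbf{x}+d)$ in \eqref{m5obj}, and a maximum over $\mathcal{U}$ is at least as large as the value at any particular $\bm{\mu}\in\mathcal{U}$; hence the robust optimal value is greater than or equal to the optimal value of every instance.

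I expect the main obstacle to be the convexity bookkeeping of the first stage rather than the (essentially textbook) deductions of the second. A priori $\mathbf{z}^T\mathbf{H}\mathbf{z}\le 0$ defines a non-convex double cone, and the KKT normalization constraints \eqref{unce34}--\eqref{unce35} couple primal and dual quantities bilinearly, so convexity is not visible term by term; the crux is to show, using the nonnegativity carried over from the AC model ($D$-type variables $c_{ii}+c_{jj}$ and, under strong duality, their conjugate multipliers), that each such constraint collapses onto a single cone nappe or onto an affine set, after which the intersection argument and the standard robust-counterpart facts close the proof.
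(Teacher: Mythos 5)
Your second stage is, in substance, exactly what the paper does: the paper's entire proof of this lemma is a citation of Proposition~2.2 in \cite{robustproof}, i.e., the standard robust-optimization meta-result that the feasible set of an RC is the intersection over $\bm{\mu}\in\mathcal{U}$ of the instance feasible sets (hence closed and convex, hence every instance is feasible whenever the RC is), and that the worst-case value dominates every instance value. So your set-inclusion and monotonicity argument is a correct re-derivation of the cited result rather than a new route, and it buys a self-contained statement at the cost of some care about which variables are held fixed across instances: only $\mathbf{x}$ (recovered as the multiplier of \eqref{m5eqx}) is common to all $\bm{\mu}$, while $\mathbf{\hat{y}},\mathbf{\hat{y}}_c,\psi$ readjust per instance, so "restricting attention to a single $\bm{\mu}$" should be phrased as feasibility of the instance with the robust $\mathbf{x}$ fixed, not as a single point feasible for all instances simultaneously.

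The genuine gap is in your first stage, and you in fact name it yourself without closing it. The normalization constraints \eqref{unce34}--\eqref{unce35} contain the products $z_{cij}\mathbf{H}\mathbf{y}_{cij}$ and $\hat{z}_{cij}\mathbf{H}\mathbf{\hat{y}}_{cij}$ of decision variables of \eqref{m51}/\eqref{m6} (the formulation is an extended self-dual system carrying both primal and dual quantities), and \eqref{16b} as printed is likewise bilinear in $R_j^-$ and $R_j$. A set cut out by bilinear equalities is not convex in general, and your proposed repair --- ``using strong duality\ldots these are read as affine relations among the retained variables'' --- is circular: the identification of the multipliers with the primal variables is a property of the optimal solution guaranteed by strong duality, not a linearization valid over the whole feasible set, so it cannot be invoked to establish convexity of that set. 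The same caveat infects your single-nappe argument for \eqref{dcone}, since the sign of the fourth component of $\mathbf{z}_{3ij}$ is deduced through the very constraint \eqref{unce34} whose convex reading is in question. To make the first stage rigorous you would either have to show these couplings can be reformulated convexly (or eliminated) before asserting closed convexity, or do what the paper does and place the lemma at the level of the abstract uncertain convex program, leaving the conic-duality bookkeeping of \eqref{m51} to the cited derivation in \cite{yasasvi2021robust}. As written, the claim ``the feasible set of \eqref{m6} is a closed convex set'' is asserted, not proved.
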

\begin{proof}
Please see the proof of Proposition 2.2 in \cite{robustproof}.
\end{proof}
\begin{lemma}\label{theorem1}
As per Theorem 1 of \cite{probrobust}, with availability of flexible generation in \eqref{Ag1}, and $\underline{Q}_{gi}=-\infty,\forall i\in\mathcal{G}$, \eqref{m6} is exact for robust $\mathbf{x}$ in a cyclic network with cycle of size 3.
\end{lemma}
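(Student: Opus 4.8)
The plan is to reduce Lemma~\ref{theorem1} to the exactness result already established for deterministic SOC--relaxed ACOPF on a three-cycle, namely Theorem~1 of \cite{probrobust}, by exploiting the structure of the RC \eqref{m6} together with the strong-duality fact (stated just above Lemma~\ref{convfeasibility}) that the Lagrange multipliers of \eqref{m6} coincide with the primal variables $\mathbf{x},\mathbf{y},\mathbf{y}_c,\mathbf{\hat y},\mathbf{\hat y}_c,\psi$ of \eqref{model5}. First I would fix the worst-case realization $\bm{\mu}^\star\in\mathcal{U}$ selected by \eqref{repres}--\eqref{16d}; by Lemma~\ref{convfeasibility} the optimal $\mathbf{x}$ returned by \eqref{m6} is feasible for the base case ($\bm{\mu}=0$) and for this $\bm{\mu}^\star$, and the two instances are exactly the deterministic SOC--relaxed ACOPF \eqref{model2} on the three-cycle network with load/RES data $P_{di},P_{ri}$ and $\hat P_{di},\hat P_{ri}$ respectively, augmented by the affine AGC response \eqref{Ag1} and the ramp bound \eqref{m5ramp}. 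So the claim ``\eqref{m6} is exact for robust $\mathbf{x}$'' means: the cone inequalities \eqref{m2cone2} and \eqref{m5cone2} hold with equality at the optimum, i.e.\ $\mathbf{y}_{cij}^T\mathbf{H}\mathbf{y}_{cij}=0$ and $\mathbf{\hat y}_{cij}^T\mathbf{H}\mathbf{\hat y}_{cij}=0$ for the single cycle $(i,j)$, which is precisely the statement that the recovered $(c,s)$ satisfy \eqref{16} and hence correspond to a genuine AC voltage profile.

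Next I would verify that the two hypotheses imported from \cite{probrobust} are in force for each fixed instance. The assumption $\underline{Q}_{gi}=-\infty$ removes the lower reactive limits in \eqref{6} (equivalently the corresponding rows of \eqref{m2ineq}/\eqref{m5ineq}), which is exactly the condition under which the three-cycle SOC relaxation is shown to be exact in \cite{probrobust}; and ``availability of flexible generation in \eqref{Ag1}'' guarantees that the active-power balance \eqref{pgnew} can be met for $\bm{\mu}^\star$ by an admissible $\psi$ respecting \eqref{ramp1}, so the $\bm{\mu}^\star$-instance is non-empty and the relaxation is tight there. The base-case instance is covered directly by Theorem~1 of \cite{probrobust}. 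Since exactness is an instance-wise property of the recovered primal point, and the RC produces a single $\mathbf{x}$ together with instance-specific $(\mathbf{y},\mathbf{y}_c)$ and $(\mathbf{\hat y},\mathbf{\hat y}_c)$ each feasible for its own SOC--relaxed ACOPF, applying the cited theorem to the $\bm{\mu}=0$ block and to the $\bm{\mu}=\bm{\mu}^\star$ block yields tightness of all four cone constraints, hence an exact AC-feasible robust solution.

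The one subtlety I would argue carefully is that the coupling introduced by the ramp constraint \eqref{m5ramp} and the fact that $\mathbf{x}=\{P_{gi},c_{ii}\}$ is \emph{shared} between the two blocks does not destroy instance-wise exactness. The point (already anticipated in Remark~\ref{remark3}) is that \eqref{m5eq}--\eqref{m5cone2} are literally a copy of \eqref{m2eq}--\eqref{m2cone2} with shifted injections and an extra affine term; fixing $\mathbf{x}$ and $\psi$ to their optimal values decouples the problem into two independent deterministic SOC--relaxed ACOPFs that each already satisfy their power-balance, flow, and voltage constraints by Lemma~\ref{convfeasibility}. Exactness of \cite{probrobust} only uses the network topology (a single 3-cycle), the cone constraint, and the absence of a reactive lower bound --- none of which is affected by pre-fixing $\mathbf{x},\psi$ --- so it applies verbatim to each block. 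I expect the main obstacle to be purely expository: making the identification between the ``RC-recovered'' primal variables and a genuine feasible point of each deterministic three-cycle ACOPF fully rigorous, i.e.\ spelling out via strong duality and \eqref{m5eqx} that the Lagrange multipliers of \eqref{m6} indeed satisfy all of \eqref{m2eq}--\eqref{m2cone2} and \eqref{m5eq}--\eqref{m5cone2}, after which Theorem~1 of \cite{probrobust} closes the argument.
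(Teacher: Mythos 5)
Your overall skeleton --- use strong duality to identify the Lagrange multipliers of \eqref{m6} with the primal variables of \eqref{model5}, fix the robust $\mathbf{x}^*$ (and $\psi$) so recovered, split \eqref{model5} into the $\bm{\mu}=0$ block and the worst-case block, and invoke Theorem 1 of \cite{probrobust} on each deterministic three-cycle instance --- is the same as the paper's. The paper additionally routes the topological claim through Theorem 3 of \cite{hijazi} (equivalence of the SDP relaxation with SOC plus cyclic constraints on a 3-cycle), which you omit but which is a minor bridging step.

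The genuine gap is in how you use the hypothesis $\underline{Q}_{gi}=-\infty$. You assert that the cone inequalities \eqref{m2cone2} and \eqref{m5cone2} ``hold with equality at the optimum'' of \eqref{m6}, and that the reactive lower bound merely removes rows of \eqref{m2ineq}/\eqref{m5ineq}. That is not what the lemma claims and not how the cited exactness result operates: the $(\mathbf{y},\mathbf{y}_c,\mathbf{\hat y},\mathbf{\hat y}_c)$ returned at the optimum of \eqref{m6} need not make the rotated cones tight, and exactness is asserted \emph{only for the robust} $\mathbf{x}$ (as the remark immediately following the lemma emphasizes). The paper closes this by introducing, for the fixed $\mathbf{x}=\mathbf{x}^*$, the auxiliary recovery problem $\max_{\mathbf{y},\mathbf{y}_c}\sum_{i\in\mathcal{G}}Q_{gi}$ subject to \eqref{m2eq}--\eqref{m2cone2}, together with its hatted analogue under the worst-case $\bm{\mu}$ and the constraints \eqref{m5eq}--\eqref{m5ramp}. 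It is precisely the unboundedness of $Q_{gi}$ from below that makes this reactive-power maximization the mechanism forcing the cone constraints to be active at its optimizer, so that a voltage profile satisfying \eqref{16} and \eqref{17} can be reconstructed; Theorem 1 of \cite{probrobust} is applied to that recovery problem, not to \eqref{m6} itself. This auxiliary problem is the missing idea, and the step you dismiss as ``purely expository'' is exactly where it is needed.
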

\begin{proof}
Please see proof in Appendix \ref{appB}.
\end{proof}
\begin{remark}
\textit{In Lemma \ref{theorem1}, by exact, it is meant that the optimal value of \eqref{m6} is same as the exact actual robust ACOPF model, and also, for any admissible value of $\bm{\mu}$, there exists \textbf{$c_{ii},c_{ij},s_{ij},\theta_i$} for which \eqref{16} and \eqref{17} are satisfied. It is to be noted that the robustness and exactness of proposed formulation is ensured only for robust $\mathbf{x}$. Further, even if the assumptions in Lemma \ref{theorem1} are not true, the feasibility of robust setpoints ($P_{gi}$, $c_{ii},\forall i\in\mathcal{G}$) obtained from \eqref{m6} is numerically ensured via MCS based participation factored AC power flows in Section \ref{sec4444}}.
\end{remark}
The paper focuses on the short-term operation of AC power network, where the deviations in short-term injection uncertainties of loads and RES generations are not high \cite{posoco}. Hence, no infeasibility issues are encountered in the proposed formulation for the considered test cases. \textit{However, if infeasibility occurs for large injection uncertainties (please see Remark \ref{remark3}), load curtailment can be used as additional emergency corrective action \cite{loadshedd1}, which shall require solution of an exclusive corrective security constrained ACOPF for the specific case}. Although, it may be noted that solving corrective security constrained ACOPF for specific infeasibility may be computationally burdensome for short-term operation of power network. The budget of uncertainty constraints to reduce conservatism of the robust setpoints from \eqref{m6} are as follows.

\subsection{Budget of uncertainty constraints}\label{sec33}
With respect to \eqref{m6}, the budget of uncertainty constraints are
\begin{subequations}
\begin{equation}\label{b1}
\mathcal{U}=\{\acute{\mu}_j:\acute{\mu}_j=\alpha_j\mu_j,\mu_j\in[\underline{\mu}_j,\overline{\mu}_j],\alpha_j\in\mathcal{A};\forall j=1,\dots,|\mathcal{U}|\}
\end{equation}
\begin{equation}\label{b2}
\mathcal{A}=\{\alpha_j:\sum_{j=1}^{|\mathcal{U}|}\alpha_j=\Gamma,\alpha_j\in\{0,1\},\forall j=1,\dots,|\mathcal{U}|\}
\end{equation}
where $\Gamma$ is the budget parameter. $\Gamma$ takes integer values in $[0,|\mathcal{U}|]$ and $\alpha_j$ indicates the status of $\mu_j$. The incorporation of \eqref{b1}, \eqref{b2} in \eqref{m6} ensures that only a subset of the uncertain parameters affect the optimal robust solution. \eqref{b1} can be considered in \eqref{m6} by modifying \eqref{ru} as
\begin{equation}\label{bud18}
\alpha_j(R_j^+\underline{\mu}_j-R_j^-\overline{\mu}_j)\leq t_j\leq\alpha_j(R_j^+\overline{\mu}_j-R_j^-\underline{\mu}_j)
\end{equation}
$\forall j=1,\dots,|\mathcal{U}|$. The constraint \eqref{bud18} consists of bilinear terms. With the symmetric uncertainty, i.e., $\underline{\mu_j}= -\overline{\mu}_j$, considered, \eqref{bud18} can be written as $\alpha_j(R_j^++R_j^-)\underline{\mu}_j\leq t_j\leq\alpha_j(R_j^+ + R_j^-)\overline{\mu}_j$, or $ |{t_j}| \leq \alpha_j(R_j^+ + R_j^-)\overline{\mu}_j $  which is a convex constraint. Further, to avoid consideration of the integer variables $\alpha_j$  in \eqref{b2} so that the computational burden of solving the proposed formulation for short-term operation of AC power network is minimized, \eqref{b2} can be restated as
\begin{equation}\label{bud19}
\sum_{j=1}^{|\mathcal{U}|}\alpha_j=\Gamma;\alpha_j\in[0,1];\alpha_j(\alpha_j-1)\leq0
\end{equation}
\end{subequations}
$\forall j=1,\dots,|\mathcal{U}|$. In \eqref{bud19}, $\alpha_j$ need not be an integer variable. Nevertheless \eqref{b2} is automatically satisfied, when \eqref{bud19} is satisfied. Hence, the proposed single-stage non-recursive non-integer dual RC with budget of uncertainty constraints and with respect to \eqref{m6} is
\begin{align}\nonumber
&\max d-\mathbf{f}^T(\mathbf{z}_2+\mathbf{\hat{z}}_2)-\mathbf{e}^T\mathbf{z}_1-\mathbf{\hat{e}}^T\mathbf{\hat{z}}_1-\mathbf{r}^T\mathbf{\hat{z}}_r+\mathbf{1}^T\mathbf{t}\\\label{eq10}
&\text{s.t. }\eqref{16b}, \eqref{16d}, \eqref{bud18}, \eqref{bud19}
\end{align}
\begin{remark}
\textit{\eqref{eq10} is the dual RC of \eqref{model5} with budget of uncertainty constraints, which allows a trade-off between conservatism and robustness of optimal solution}.
\end{remark}
Depending on the specified $\Gamma\in\mathbb{Z}$, the following are true
\begin{itemize}
\item If $\Gamma=0$, then no parameters are uncertain, and \eqref{eq10} is same as \eqref{model2} (due to strong duality stated in Section \ref{robust}).
\item If $\Gamma=|\mathcal{U}|$, then all parameters are uncertain, and \eqref{eq10} is same as \eqref{m6}.
\item $0<\Gamma<|\mathcal{U}|$, then \eqref{eq10} automatically chooses the short-term injections that are supposed to be uncertain, and consequently, a trade off is ensured between robustness and conservatism of the obtained $\mathbf{x}$.
\end{itemize}

\section{Numeric Results}\label{sec4}
The efficacy of the proposed novel dual, convex RC of ACOPF with short-term injection uncertainties and associated formulation with budget of uncertainty constraints are verified on four NESTA \cite{coffrin2014nesta} and the case9241pegase \cite{pegase} systems in terms of generation cost and computation time. Robustness analysis for in-range (interval uncertainty) and out-of-range samples on independent and identically distributed scenarios and feasibility verification of the obtained robust solution is done through MCS based participation factored AC power flow \cite{plf1}. The robust solution from the proposed formulation is further compared with the solution from the deterministic ACOPF and the solutions of formulations reported in \cite{bai,louca}. RESs are assumed to be present in the systems with a maximum penetration capacity of $30\%$. The ramp rate of a conventional generator in \eqref{ramp1} is $\pm0.75$ times the associated base-point output. Since, the emphasis is on ACOPF with short-term injection uncertainties, uncertainties beyond $15\%$ are not considered. All numeric results are obtained in MATLAB using SeDuMi \cite{sedumi} on an Intel i7-2600 CPU with 3.40 GHz processor and 8 GB RAM.

\subsection{Performance of \eqref{m6} and impact of variable participation factors}
\begin{table}[htbp]\vspace{-0.5em}
\centering
\caption{Optimal objective and computation time for $5\%$ load uncertainty and varying uncertainties in RES generations}
\label{t3}
\resizebox{\columnwidth}{!}{%
\begin{tabular}{c|cc|cc|cc|cc|cc}
\hline\hline
$\% $ RES  & \multicolumn{2}{c|}{$\Gamma = |\mathcal{U}|$} & \multicolumn{2}{c|}{$\Gamma = 1$} & \multicolumn{2}{c|}{$\Gamma = 3$} & \multicolumn{2}{c|}{$\Gamma = 5$} & \multicolumn{2}{c}{Violation Probability} \\\cline{2-11}
uncertainty  & Obj. value & Time & Obj. value & Time & Obj. value & Time & Obj. value & Time & in- & out-of-  \\
& ($\$/h$) & (s) & ($\$/h$) & (s) & ($\$/h$) & (s) & ($\$/h$) & (s) & range (\%) & range (\%)  \\\hline
\multicolumn{11}{c}{\textbf{NESTA 14}} \\ \hline
0  & 162.630 & 0.563 & 158.994 & 0.569 & 160.700 & 0.571 & 161.490 & 0.629 & 0 & 5.96 \\
5  & 164.379 & 0.635 & 158.994 & 0.563 & 160.925 & 0.592 & 162.442 & 0.545 & 0 & 12.1\\
10  & 166.138 & 0.565 & 158.994 & 0.594 & 162.472 & 0.543 & 164.193 & 0.577 & 0 & 45.0 \\
15  & 167.906 & 0.561 & 159.500 & 0.568 & 164.225 & 0.552 & 165.954 & 0.566 & 0  & 53.9 \\\hline
\multicolumn{11}{c}{\textbf{NESTA 57}} \\ \hline
0 & 778.488 & 0.839 & 747.790 & 0.780 & 757.769 & 0.716 & 762.010 & 0.7552 & 0 & 34.7\\
5 & 789.461 & 0.853 & 747.790 & 0.780 & 758.609 & 0.725 & 768.603 & 0.724 & 0 & 48.3 \\
10 & 800.599 & 0.767 & 747.790 & 0.741 & 769.592 & 0.704 & 779.599 & 0.690 & 0  & 67.6 \\
15 & 811.900 & 0.789 & 752.904 & 0.722 & 780.739 & 0.723 & 790.760& 0.718 & 0 & 86.1 \\\hline
\multicolumn{11}{c}{\textbf{NESTA 118}} \\ \hline
0  & 2547.294 & 1.913 & 2398.124 & 1.307 & 2410.392 & 1.311 & 2420.288 & 1.284 & 0 & 19.1 \\
5  & 2592.993 & 1.825 & 2398.124 & 1.332 & 2410.392 & 1.398 & 2419.372 & 1.326 & 0 &  27.4 \\
10  & 2635.046 & 1.825 & 2398.124 & 1.387 & 2415.020 & 1.421 & 2431.605 & 1.238 & 0 & 54.2 \\
15  & 2656.603 & 1.843 & 2398.890 & 1.871 & 2424.595  & 1.678 & 2449.642 & 1.236 & 0 & 73.1 \\\hline
\multicolumn{11}{c}{\textbf{NESTA 300}} \\ \hline
0 & 14321.926 & 29.224 & 12605.007 & 13.024 & 12700.644 & 14.180 & 12829.027 & 12.692 & 0 & 49.2 \\
5 & 14402.808 & 21.697 & 12605.007 & 13.024 & 12700.644 & 14.868 & 12900.267 & 13.780 & 0 & 84.6  \\ \hline
\multicolumn{11}{c}{\textbf{PEGASE 9241}} \\ \hline
0 & 284496.88 & 126.741 & 268621.05 & 115.691 & 268691.84 & 128.132 & 268756.49 & 121.285 & 0 & 68.2 \\
5 & 286849.86 & 128.423 & 268635.39 & 119.671 & 268716.11 & 114.568 & 268786.58 & 118.172 & 0 & 94.6  \\ \hline\hline
\end{tabular}\vspace{-0.5em} 
}
\end{table}
The performance of the formulation in \eqref{m6} is tested in the presence of all injection uncertainties from loads and RESs, i.e., $\Gamma=|\mathcal{U}|$. $5\%$ uncertainty is assumed in all system loads. With fixed participation factors, i.e., $\rho_i=a_i^{-1}/(\sum_{i\in\mathcal{G}}a_i^{-1}),\forall i\in\mathcal{G}$, the optimal objective and computation time for different systems are shown in the second and third columns of Table \ref{t3} for different levels of RES injection uncertainties. It can be observed that the optimal objective in second column of Table \ref{t3} only increases by a maximum of $5\%$ of the associated objective costs with no uncertainty in the RES generations. The reported computation times also indicate that the proposed formulation may be used for short-term ACOPF with short-term injection uncertainties.

With generation participation factors as additional variables, the optimal objective can be further reduced. However, with this, \eqref{pgnew} becomes a bilinear constraint. Hence, \eqref{m6} is solved separately to analyze the impact of variable participation factors with $\psi$ and $\bm{\mu}$ fixed for the worst-case uncertainty realization (obtained with fixed $\rho_i$ in \eqref{m6}) and considering $\rho_i,\forall i\in\mathcal{G}$ as new additional optimization variables. With this formulation, for $5\%$ load, the reduction in optimal objective is $0.139\%$ for $5\%$ RES generation uncertainty, and $0.328\%$ for $15\%$ RES generation uncertainty, in NESTA $118$ bus system.

\subsection{Impact of budget of uncertainty in \eqref{eq10}}
A trade off between the robustness and conservatism of the obtained setpoints can be ensured by solving the proposed formulation in \eqref{eq10}. $\Gamma$ with \eqref{bud19} automatically decide the number of uncertain parameters that are allowed to deviate from their respective nominal values, as discussed in Section \ref{sec33}. For different $\Gamma$, the optimal cost and associated computation time for different NESTA systems and case9241pegase system are given in second to ninth columns of Table \ref{t3}. The maximum reduction in optimal cost is $5.00\%$, $7.26\%$ and $9.70\%$, for the NESTA $14$ bus, $57$ bus and $118$ bus systems, respectively, corresponding to $\Gamma=1$ and $15\%$ RES uncertainty. While for $\Gamma=1$ and $5\%$ RES uncertainty, the same reduction is $12\%$ and $6\%$ for NESTA $300$ bus system and case9241pegase system, respectively. Further, the computation time of solving the proposed formulation in \eqref{eq10}, generally, reduces with reduced $\Gamma$ due to lesser number of uncertain injections.

Also, it can be noted from Table \ref{t3} that certain objectives are same for $\Gamma=1$ and $\Gamma=3$, even for different RES uncertainty, as load uncertainties are more dominant. With reduced $\Gamma$, the AC power network's security margin also tends to reduce, as shown in Fig. \ref{fig.57}, because incorporating the budget of uncertainty constraints causes critical lines to carry more power, thus reducing the overload margin. For $5\%$ load and $10\%$ RES uncertainty, the brown line in Fig. \ref{fig.57} is the maximum active power limit of each transmission line in the NESTA $57$ bus system, while the blue and orange lines indicate the active power flows on lines for $\Gamma=5$ and $\Gamma=|\mathcal{U}|$, respectively. These active power line flows are the maximum possible values for each line, obtained from MCS ($N_s=10000$) based participation factored AC power flow with robust generator setpoints from \eqref{eq10} in Algorithm \ref{alg1} for associated realizations of uncertainties. None of the $N_s$ scenarios have power flow divergence or constraint violation, which also indicates the feasibility of obtained robust setpoints for actual ACOPF.
\begin{figure}[htbp]\vspace{-0.5em}
\centering
\includegraphics[width=0.8\textwidth,height=4.4cm]{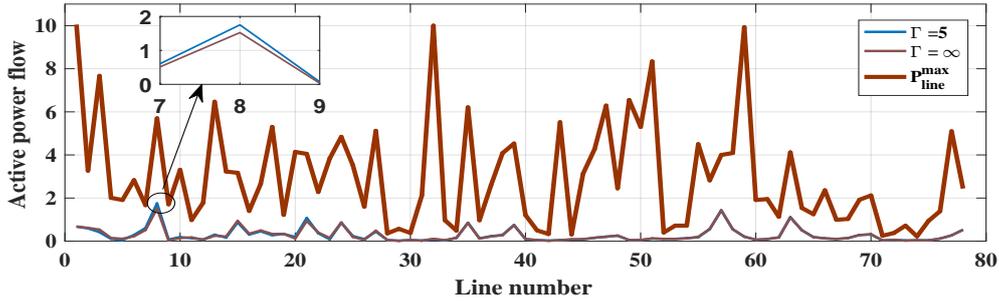}
\caption{Active power flows for NESTA $57$ bus system for different $\Gamma$ at $5\%$ load and $10\%$ RES injections' uncertainties}
\label{fig.57}\vspace{-0.5em}
\end{figure}

\begin{algorithm}[htbp]
\SetAlgoLined
\caption{MCS based robustness check}\label{alg1}
\textbf{Initialization:} Obtain robust setpoints $\mathbf{x}$ from \eqref{eq10}. For the chosen $\mathcal{U}=[\underline{\bm{\mu}},\overline{\bm{\mu}}]$, create $N_s$ random scenarios of uncertain injections. Set $count=1$;\\
\While{$count\neq N_s$}{
run participation factored AC power flow \cite{plf1} using $\mathbf{x}$ for each scenario
\eIf{converged}{Evaluate the inequality constraints' values in exact ACOPF
\eIf{all inequality constraints are satisfied}{go to next sample, {$count\gets count + 1$}}{$\mathbf{x}$ is not robust}}{$\mathbf{x}$ is not robust}}
\textbf{$\mathbf{x}$ is robust}
\end{algorithm}

\subsection{Validating robustness and feasibility of obtained setpoints}\label{sec4444}
The conventional generators' setpoints obtained from \eqref{eq10} or \eqref{m6} are robust against the injection uncertainties, when for all realizations of uncertainties as defined by the associated interval uncertainty sets, the AC power flow converges and provides a solution that satisfies all the physical and operational limits of the AC power network. Also, the proposed formulations use SOC relaxation, which are exact for mesh networks with assumptions in \cite{probrobust,hijazi}. Hence, the feasibility of the obtained robust solution also needs to be validated for exact ACOPF. Further, as the paper focuses on short-term operation of AC power network, validating the robustness and feasibility of the obtained setpoints (for actual non-determistic ACOPF) by running deterministic ACOPFs is not a viable option. Additionally, validating robustness for short-term operation by solving exclusive actual non-linear non-convex ACOPFs may defeat the purpose of obtaining time-efficient robust setpoints. Therefore, the feasibility and robustness of the setpoints from \eqref{m6} and \eqref{eq10} are verified with MCS based participation factored AC power flow \cite{plf1} in Algorithm \ref{alg1} for all cases in Table \ref{t3}. For each scenario of the $N_s=10000$ independent and identically distributed scenarios, the participation factored AC power flow converges and all inequality constraints of exact ACOPF are satisfied at the converged solution for the considered uncertainties. Hence, the tenth column of Table \ref{t3} reports $0\%$ violation for all the in-range scenarios robustness analysis.

Further, the maximum possible values of inequality constraints in exact ACOPF after robustness check with $N_s=10000$ scenarios in Algorithm \ref{alg1} for $5\%$ uncertainty in all system loads and $15\%$ uncertainty in all RES injections in NESTA $118$ bus system are shown in Fig. \ref{fig.51} and Fig. \ref{fig.52}. In Fig. \ref{fig.51}(a), the normalized active power generations and associated ramp are shown. Generators $14$ and $46$ in Fig. \ref{fig.51}(a) refer to buses $31$ and $103$ in Fig. \ref{fig.52}(a), respectively. Actual $\hat{P}_g$, $P_g$ and ramp for bus $31$ are $0.23$pu, $0.217$pu and $0.0131$pu, respectively. For bus $103$, these values are $0.98$pu, $0.9646$pu and $0.0154$pu, respectively. It is clear from Fig. \ref{fig.57} - Fig. \ref{fig.52} that inequality constraints of exact ACOPF are within their limits. Thus, the robust setpoints are feasible for actual ACOPF (as stated in Appendix \ref{appB}) and immune to uncertainties.
\begin{figure}[htbp]\vspace{-0.5em}
\centering
\includegraphics[scale=0.7]{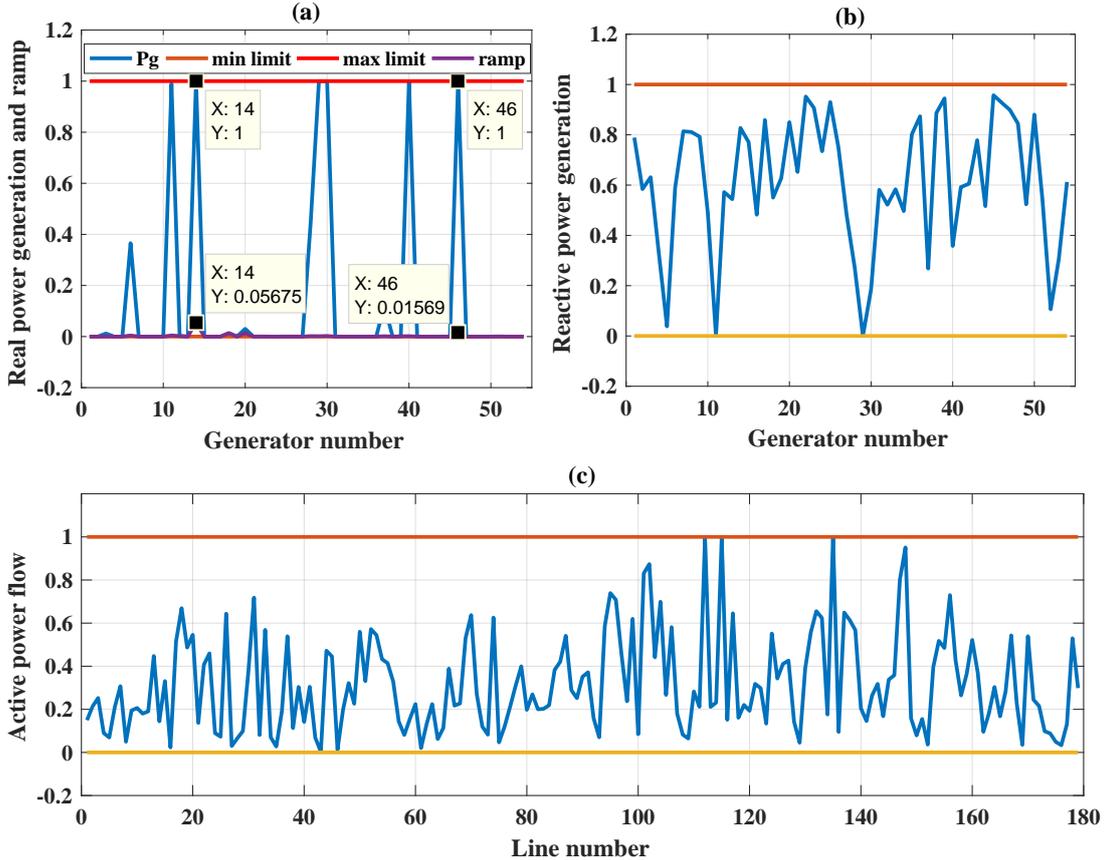}
\caption{Normalized maximum limit, obtained maximum values, and minimum limit of (a) active power generations and generation ramp in \eqref{ramp1}, (b) active power flows, and (c) reactive power generations for $N_s=10000$ scenarios in Algorithm \ref{alg1} for NESTA $118$ bus system with $5\%$ uncertainty in all loads and $15\%$ uncertainty in all RES injections}\label{fig.51}\vspace{-0.5em}
\end{figure}
\begin{figure}[htbp]
\centering\vspace{-0.5em}
\includegraphics[scale=0.7]{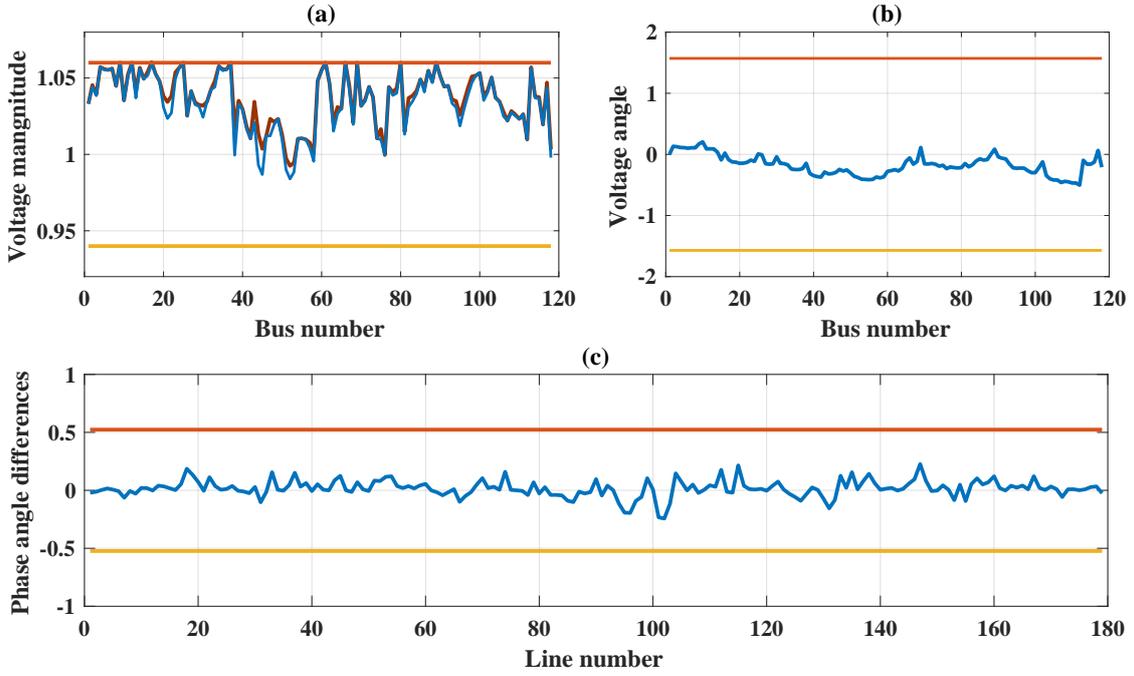}
\caption{Maximum limit, obtained maximum values, and minimum limit of (a) bus voltage magnitudes (pu), (b) bus voltage phase angles (radians), and (c) phase angle differences (radians) across lines for condition in Fig. \ref{fig.51}. The blue line in Fig. \ref{fig.52}(a) shows the obtained minimum bus voltage magnitudes}\label{fig.52}\vspace{-0.5em}
\end{figure}

The robustness of obtained setpoints for unseen realization of uncertainty using out-of-range analysis is also tested. For $\Gamma=|\mathcal{U}|$ and a given $\overline{\bm{\mu}}$ (in Section \ref{sec2b1}), $10000$ random scenarios for out-of-range analysis are generated as $\bm{\mu}\in(-\overline{\bm{\mu}}-0.05,-\overline{\bm{\mu}})\cup(\overline{\bm{\mu}},\overline{\bm{\mu}}+0.05)$. The violation probability in the last column of Table \ref{t3} indicates the \% scenarios in which at least one of the constraints in exact ACOPF is unsatisfied, which is expected as the obtained setpoints are robust only for the given interval of uncertainty, i.e., $\bm{\mu}\in[-\overline{\bm{\mu}},\overline{\bm{\mu}]}) $\cite{robustproof}. Also, with increase in uncertainty level, the violation probability increases due to limited resources in respective AC networks.
\begin{table}[htbp]\vspace{-0.5em}
\centering
\caption{Maximum absolute difference ($\eta$) between robust setpoints from \eqref{m6} and actual ACOPF with uncertain injections set at worst-case for $5\%$ uncertainty in all load and RES injections}
\label{feastable}
\begin{tabular}{cccccc}\hline \hline
Test & NESTA & NESTA & NESTA & NESTA & PEGASE \\
system & 14 & 57 & 118 & 300 & 9241 \\ \hline
 $\eta$ ($\times10^{-5}$) & 0.00492 & 8.50  & 6.56 & 11.3 & 39.5 \\\hline\hline
\end{tabular}\vspace{-0.5em}
\end{table}

The feasibility of the obtained robust setpoints from \eqref{m6} is further verified by comparing the same with the conventional generator setpoints obtained from the actual non-linear and non-convex ACOPF with uncertain injections set at the realized worst-case in \eqref{m6}. The associated maximum absolute differences ($\eta$) for the considered systems with $5\%$ uncertainty in all loads and RES generations is given in Table \ref{feastable}, which indicate that the obtained robust conventional generator setpoints from proposed formulation are also feasible to the actual non-linear and non-convex ACOPF. Since, the setpoints are feasible, the robustness verification via MCS based participation factored AC power flow also yields $0\%$ violation for all in-range scenarios. Similar low values of $\eta$ have also been observed for other cases in Table \ref{t3}, which further support the tightness of \eqref{convcone} and \eqref{23}.

\subsection{Comparison with previous formulations}
\subsubsection{Comparison with deterministic ACOPF}
The solution of deterministic ACOPF in \eqref{model1} is analyzed for injection uncertainties in the NESTA $118$ bus system. Considering nominal values of loads and RES generations, \eqref{model1} is solved using MATPOWER to obtain the conventional generator settings. The optimal objective cost from \eqref{model1} and the associated computation time are given in Table \ref{dettable}.
\begin{table}[htbp]\vspace{-0.5em}
\centering
\caption{Proposed formulation in \eqref{m6} versus deterministic (Det.) ACOPF for NESTA $118$ bus system with $5\%$ load and $15\%$ RES injections uncertainty in Fig. \ref{fig1}}
\label{dettable}
\begin{tabular}{ccc}\hline\hline
Attributes & Det. & Proposed\\\hline
Objective Cost ($\$/hr$) & 2382.38  &  2656.603 \\
Computation time ($s$)  & 0.43 & 1.843 \\
\% violation in active power flow & 100 & 0 \\
\% violation in voltage magnitude & 56.9 & 0 \\ \hline\hline
\end{tabular}\vspace{-0.5em}
\end{table}

For the obtained settings, all constraints of the AC power network are satisfied, if all loads and RES generations are at their respective nominal values. However, for non-zero injection uncertainties, the same generator settings fail to be feasible. This is verified by testing the obtained generator settings (at nominal load and RES generation) against $10,000$ independent and identically distributed scenarios (using Algorithm \ref{alg1}) for $5\%$ uncertainty in all loads and $15\%$ uncertainty in all RES generations. Similar results are shown in Fig. \ref{fig1}, where 6 lines (in red color) exceed their limits by $3\%$ to $48.6\%$ from their maximum capacity for all generated $10,000$ scenarios. Also, bus voltage magnitude constraints are violated for $5690$ scenarios for similar conditions. For the same load and RES injection uncertainties, the optimal objective from \eqref{m6} is $11.51\%$ higher than the optimal cost from \eqref{model1}. However, no constraints' violations are observed for the robust solution obtained from \eqref{m6}.

\subsubsection{Comparison with \cite{bai}}\label{sec4447}
The robust ACOPF formulation in \cite{bai} is also a single-stage formulation. Hence, the solution from proposed formulation in \eqref{m6} is compared with \cite{bai} in terms of objective cost and feasibility for AC power network. Table \ref{baicomptable} shows the comparison between the objective cost of proposed formulation with respect to the objective cost obtained by \cite{bai}, where, for same uncertainty level, \cite{bai} provides more than $5\%$ higher cost than the proposed formulation. \cite{bai} checks for robustness of solution via real-time OPF, which is impractical from the short-term operation perspective. Also, if the validation is to be done by OPF, then the purpose of solving robust ACOPF is lost. Hence, using Algorithm \ref{alg1} with $N_s=10,000$, the robustness of setpoints from \cite{bai} (assuming unity voltage magnitudes) for considered uncertainties in Table \ref{baicomptable} is less than $1.06\%$, whereas from \eqref{m6}, $0\%$ violations for in-range robust analysis is observed in Table \ref{t3}.
\begin{table}[htbp]\vspace{-0.5em}
\centering
\caption{Proposed formulation in \eqref{m6} versus \cite{bai} for NESTA $14$ bus system with $5\%$ in all loads and different levels of RES injections uncertainty}
\label{baicomptable}
\begin{tabular}{cccccc}\hline \hline
$\%$ RES Unc. & $5\%$ & $10\%$ & $15\%$ & $20\%$ \\\hline
$\%$ change in objective cost&  5.5912 &  5.4003  & 5.2083 & 5.0165  \\\hline\hline
\end{tabular}\vspace{-0.5em}
\end{table}

\subsubsection{Comparison with \cite{louca}}\label{sec4446}
The solution from \eqref{m6} is also compared with the solution obtained from the formulation of \cite{louca} in terms of optimal objective, computation time, and load curtailment for the IEEE 14 bus system data given in \cite{louca}. For the same network condition, the robust solution from \eqref{m6} has low cost as compared to the cost reported in Fig. 1(b) on page 10 of \cite{louca}, without any infeasibility until $\sigma=4.5$ ($30\%$ uncertainty in RES) in Fig. \ref{complouca}. For situation just beyond $\sigma=4.5$, exclusive corrective security constrained ACOPF with load curtailment option is solved for the latest feasible trend of the worst-case injection scenario identified from the proposed formulation. The load curtailment penalty is $\$4,000$/MWh, same as in \cite{louca}. The maximum load curtailment from our formulation is $0.0109\%$ of total active power load at non-generator buses, while the same in \cite{louca} is $0.16\%$. Thus, the proposed formulation provides $14.91\%$ load saving. Also, the reported average computation time of formulation in \cite{louca} is $130.4$s, while the same for \eqref{m6} for feasible cases in only $1.367$s (approximately $98.95\%$ less).
\begin{figure}[htbp]\vspace{-0.5em}
\centering
\includegraphics[scale=0.411]{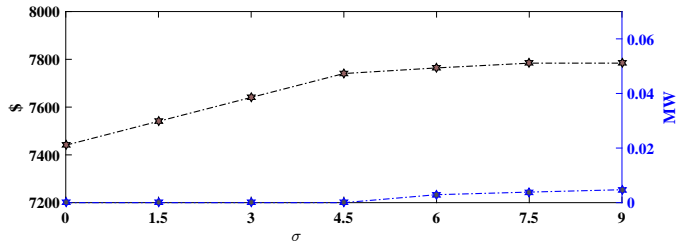}
\caption{Optimal operation cost (black) and load curtailment (blue) from \eqref{m6} for different $\sigma$ tested on IEEE 14 bus system of \cite{louca}}\label{complouca}\vspace{-0.5em}
\end{figure}
\vspace{-0.9em}
\section{Conclusion}\label{sec5}
In this paper, a novel single-stage non-recursive non-integer dual RC of ACOPF is proposed that provides robust setpoints (base-point active power generations and terminal voltage magnitudes) of conventional generators, which are immune to the short-term injection uncertainties in loads and RES generations. By considering constraints for both base-case (no uncertainties) and worst-case uncertainties, and coupling them with ramp constraints on active power generations of conventional generators, the non-linear impact of uncertainties on optimization variables (except conventional generators' setpoints), including change in total active power loss, is inherently ensured without resorting to any affine policies. Using the strong duality property of the novel dual form of convexified ACOPF, novel linear constraints are proposed for the worst-case realization of short-term injection uncertainties. The strong duality of novel dual aids in directly retrieving the desired robust variables in a non-recursive single-stage manner. As compared to previous formulations, the proposed formulation does not rely on cutting plane, mixed-integer or multi-stage approaches. Instead, a novel convexified robust ACOPF model is proposed, which can be solved by existing commercial solvers, such as SeDuMi. The efficiency of proposed formulation is demonstrated through numeric simulations on NESTA systems and the case9241pegase system. For an uncertainty range of $10-15\%$, the obtained worst-case optimal generation cost is $10-25\%$ higher as compared to the nominal generation cost. To reduce this over conservatism, budget of uncertainty constraints are included, which reduce the increase in the generation cost to the range of $5\%$ - $12\%$ of the nominal cost. The obtained robust setpoints are feasible for all realizations of short-term injection uncertainties, which is verified through proofs and MCS based participation factored AC power flows. This verification also justifies the feasibility of obtained setpoints for actual non-linear, non-convex ACOPF with injection uncertainties. In summary, the proposed dual formulation provides time-efficient realization of worst-case uncertainty in the equality constraints of ACOPF. The formulation in \cite{bai} gives higher cost (more than $5\%$) than proposed formulation, with more than $99\%$ probability of violation for in-range robust analysis. Also, compared to \cite{louca}, the proposed formulation has low cost ($14.91\%$ load saving) and computation time ($\approx 98.95\%$ less). It will be of interest to extend the proposed idea to solve robust multi-period security constrained OPF and transmission switching problems.

\section*{Acknowledgement}We are grateful to Science and Engineering Research Board, New Delhi, India for providing financial support to carry out this research work via project no. CRG/2020/001306.

\setcounter{equation}{0}
\renewcommand{\theequation}{\thesection.\arabic{equation}} 
\appendix
\section{Proof of Lemma \ref{theorem1}}\label{appB}
Let $\mathbf{x}^*$ be the Lagrange multiplier of \eqref{m5eqx} obtained from the optimal solution of \eqref{m6}. Since \eqref{model5} and \eqref{m51} are strong primal-dual pairs \cite{yasasvi2021robust}, thus, $\mathbf{x}^*$ along with other Lagrange multipliers are also the optimal solution of \eqref{model5}. Thus, \eqref{model5}, \eqref{m51}, or \eqref{m6} is exact, if for $\mathbf{x}=\mathbf{x}^*$, the following feasibility problem
\begin{align}\label{appeq1}
\max_{\mathbf{y},\mathbf{y}_c}\sum_{i\in\mathcal{G}}Q_{gi}\text{ s.t. } \eqref{m2eq} - \eqref{m2cone2}
\end{align}
has an optimal solution such that \eqref{16} and \eqref{17} are satisfied. A problem similar to \eqref{appeq1} can also be defined in terms of $\mathbf{\hat{y}}$, $\mathbf{\hat{y}}_c$, $\psi$ with $\mathbf{x}=\mathbf{x}^*$, worst-case $\bm{\mu}$ in \eqref{m6}, \eqref{m5eq} - \eqref{m5ramp} as constraints, and objective in terms of $\hat{Q}_{gi}\forall i\in\mathcal{G}$. As per Theorem 3 of \cite{hijazi}, for a network with cycles of size 3, semi-definite programming formulation is equivalent to relaxed version of cyclic constraints combined with SOC constraints. Therefore, under the assumption of presence of flexible generator in a lossless weakly cyclic network, \eqref{appeq1} is same as the one used for proof of Theorem 1 in \cite{probrobust}. Hence, the same reasoning can be used here to show the exactness of \eqref{model5}, \eqref{m51}, or \eqref{m6}.
\bibliographystyle{IEEEtran}
\bibliography{IEEEabrv,mainarxiv.bbl}
\end{document}